\newtheorem{theorem}{Theorem}
\newtheorem{lemma}[theorem]{Lemma}
\newtheorem{corollary}[theorem]{Corollary}
\theoremstyle{definition}
\theoremstyle{remark}
\theoremstyle{remark}
\numberwithin{theorem}{section}
\providecommand{\N}{}
\renewcommand{\N}{{\mathbb N}}
\newcommand{\E}{\textsf{\upshape E}}
\newcommand{\prob}{\textsf{\upshape Pr}}
\begin{document}

\title{A Direct Proof of the Short-Side Advantage\\[2mm] in Random Matching Markets}

\author{
Simon Mauras\thanks{Inria, FairPlay joint team, Palaiseau, France; e-mail: \texttt{simon.mauras@inria.fr}}
\and
Pawe\l{} Pra\l{}at\thanks{Department of Mathematics, Toronto Metropolitan University, Toronto, ON, Canada; e-mail: \texttt{pralat@torontomu.ca}}
\and 
Adrian Vetta\thanks{Department of Mathematics \& Statistics, and School of Computer Science, McGill University, Montreal, QC, Canada; e-mail: \texttt{adrian.vetta@mcgill.ca}}
}

\maketitle

\begin{abstract}
We study the stable matching problem under the random matching model where the preferences of the doctors and hospitals are sampled uniformly and independently at random. In a balanced market with $n$ doctors and $n$ hospitals, the doctor-proposing deferred-acceptance algorithm gives doctors an expected rank of order $\log n$ for their partners and hospitals an expected rank of order $\frac{n}{\log n}$ for their partners~\cite{Pit89,Wil72}. This situation is reversed in an unbalanced market with $n+1$ doctors and $n$ hospitals~\cite{AKL17}, a phenomenon known as the short-side advantage. The current proofs~\cite{AKL17,CT22} of this fact are indirect,  counter-intuitively being based upon analyzing the hospital-proposing deferred-acceptance algorithm. In this paper we provide a direct proof of the short-side advantage, explicitly analyzing the doctor-proposing deferred-acceptance algorithm. Our proof sheds light on how and why the phenomenon arises.
\end{abstract}

\section{Introduction}

In this paper, we study the deferred-acceptance algorithm for the stable matching problem. In the balanced case with $n$ doctors and $n$ hospitals, it is known that the proposing side has a significant advantage in terms of the ranks of their assigned partners. Take, for example, a random matching market, where the preference lists of each doctor and each hospital are complete lists, sampled uniformly and independently at random. Then, under the doctor-proposing deferred-acceptance algorithm, each doctor has an expected rank for its partner of order $\log n$ while each hospital has an expected rank for its partner of order $\frac{n}{\log n}$~\cite{Pit89, Wil72}. Recently, Ashlagi et al.~\cite{AKL17} proved that this advantage is reversed in an unbalanced market with $n+1$ doctors and $n$ hospitals. Now, the doctor-proposing deferred-acceptance algorithm produces a stable matching where each hospital is matched to a doctor it ranks at the order of $\log n$ on average and each doctor, excluding the one unmatched (unemployed) doctor, is matched to a hospital it ranks at the order of $\frac{n}{\log n}$ on average.

The proof of Ashlagi et al.~\cite{AKL17} (and its simplification by Cai and Thomas~\cite{CT22}) are indirect. Essentially, by studying the hospital-proposing deferred-acceptance algorithm, they show the short-side (hospitals) holds this advantage in every stable matching. In a very technical paper, Pittel~\cite{Pit19} strengthen the bounds on the average rank within each side, proving concentration using an integral formula for its probability mass function. In this paper, we present a direct proof of the short-side advantage by directly studying the doctor-proposing deferred-acceptance algorithm, at the cost of weaker constants in the asymptotic bounds. Arguably, this is the most natural way to prove this result. More importantly, it additionally sheds light on why this scenario is drastically different than the balanced marked. Indeed, as stated in~\cite{CT22}, ``\emph{because some hospital must go unmatched, the algorithm will only terminate once some hospital has proposed to every doctor. This is a very different random process, and one can imagine it must run for a much longer time, forcing the proposing hospitals to be matched to much worse partners than in the balanced case. Unfortunately, this random process is fairly difficult to analyze (for instance, to get a useful analysis, one would need to keep track of which hospital is currently proposing, which doctors they have already proposed to, and how likely each doctor is to accept a new proposal)}"; consequently, they take a different approach.

\bigskip

The paper is structured as follows. To begin, we present some background on the stable matching problem and introduce the random matching model in Section~\ref{sec:background}. Our result and a high-level overview of the argument used is given in Section~\ref{sec:result}. Section~\ref{sec:proof} contains the proof.

\section{Background}\label{sec:background}

\subsection{The Stable Matching Problem}

In the {\em stable matching problem}, we are given a set $D=\{d_1,d_2,\dots, d_m\}$ of $m$ doctors and a set $H=\{h_1,h_2,\dots, h_n\}$ of $n$ hospitals. Let $\mu$ be a matching between the doctors and the hospitals. We say doctor $d$ is matched to hospital $\mu(d)$ in the matching $\mu$, where $\mu(d)=\emptyset$ if $d$ is unmatched. Similarly, hospital $h$ is matched to doctor $\mu(h)$, where $\mu(h)=\emptyset$ if $h$ is unmatched.

Every doctor $d \in D$ has a preference ranking $\succ_d$ over the hospitals; similarly, every hospital $h \in H$ has a preference ranking~$\succ_h$ over the doctors. We assume that every doctor and every hospital prefers to be matched than unmatched. Given the preference rankings, we say that doctor~$d$ and hospital~$h$ form a {\em blocking pair} $\{d,h\}$ if they prefer each other to their partners in the matching~$\mu$; that is, $h \succ_d \mu(d)$ and $d \succ_h \mu(h)$. A matching $\mu$ that contains no blocking pair is called {\em stable}; otherwise it is unstable. Consequently, the set of stable matchings form the core of the {\em stable matching game}.\footnote{The {core} of a game is the set of feasible solutions that have no {\em blocking coalition}, a subset of agents that can all benefit by forming their own solution away from the grand coalition. For that reason, core solutions are considered ``stable". In the stable matching game, any blocking coalition must contain a blocking coalition of cardinality two, that is, a {\em blocking pair}.}

\subsection{The Stable Matching Lattice}\label{sec:lattice}

Critically, the core in this game is non-empty~\cite{GS62}. Indeed, the core may contain an exponential number of stable matchings~\cite{Knu82}. Intriguingly, the core induces a {\em distributive lattice} $(\mathcal{M}, \geqslant)$ as noted by Conway (see Knuth~\cite{Knu82}). Formally, the order $\geqslant$ is defined via the preference lists of the doctors by setting $\mu_1 \geqslant \mu_2$ if and only if every doctor weakly prefers its partner in the stable matching $\mu_1$ over its partner in the stable matching $\mu_2$; that is $\mu_1(d) \succeq_d \mu_2(d)$, for every doctor~$d$. Interestingly, McVitie and Wilson~\cite{MW71} proved that the interests of the doctors and the hospitals are diametrically opposed within the lattice. That is, every doctor weakly prefers its partner in the stable matching $\mu_1$ to its partner in the stable matching $\mu_2$ if and only if every hospital weakly prefers its partner in the stable matching $\mu_2$ to its partner in the stable matching $\mu_1$. By the lattice property, in the {\em supremum} of the lattice each doctor is matched to its most preferred partner amongst any stable matching (called its {\em best stable-partner}) and each hospital is matched to its least preferred partner amongst any stable matching (called its {\em worst stable-partner}). Thereupon, the supremum stable matching is called the {\em doctor-optimal/hospital-pessimal} stable matching. Conversely, the {\em infimum} of the lattice is the {\em doctor-pessimal/hospital-optimal} stable matching.

\subsection{The Deferred-Acceptance Algorithm}\label{sec:dfa}

The celebrated {\em deferred-acceptance algorithm} by Gale and Shapley~\cite{GS62} outputs the doctor-optimal stable matching when the doctors make proposals (see Algorithm~\ref{alg:DFA}) and the hospital-optimal stable matching when the hospitals propose. These results hold regardless of the specific order of proposals.

\begin{algorithm}
\While{there is an unmatched doctor $d$ who has not proposed to every hospital}{
Let $d$ propose to its favourite hospital $h$ who has not yet rejected it\;
  \uIf{$h$ is unmatched}{
    $h$ provisionally matches with $d$\;
  }
  \uElseIf{$h$ is provisionally matched to $\hat{d}$}{
    $h$ provisionally matches to its favourite of $d$ and $\hat{d}$, and rejects the other\;
  }
  }
\caption{Doctor-Proposing Deferred-Acceptance Algorithm}\label{alg:DFA}
\end{algorithm}

Of subsequent relevance is that the deferred-acceptance algorithm terminates either when every doctor is matched or when every unmatched doctor has proposed to every hospital. For a comprehensive study of stable matchings see, for example, the book of Gusfield and Irving~\cite{GI89}.

\subsection{The Random Matching Model}\label{sec:random-model}

The deferred-acceptance algorithm has been widely studied in {\em random matching model} beginning with Wilson~\cite{Wil72} and further classical works by Knuth, Pittel and coauthors~\cite{Knu82,Pit89,KMP90,Pit92}. In the random matching model, the ranking of each doctor is a random permutation of the hospitals, namely the set $[n]=\{1,2,\dots,n\}$, drawn uniformly and independently at random. Similarly, the ranking of each hospital is a random permutation of the set $[m]=\{1,2,\dots,m\}$, drawn uniformly and independently at random.

Consider first the case of a {\em balanced market} where the number of doctors equals the number of hospitals, that is, when $m=n$. Under the doctor-proposing deferred acceptance algorithm, with high probability, each doctor ranks its partner at position of order $\log n$ on average in their preference ranking. In sharp contrast, on average, each hospital only ranks its partner at position of order $\frac{n}{\log n}$~\cite{Wil72,Pit89}.

However, in an {\em unbalanced market} the situation is reversed. Suppose that there is one more doctor than hospital, that is, $m=n+1$. In a remarkable result, Ashlagi et al.~\cite{AKL17} showed that, with high probability, each hospital ranks its partner at $O(\log n)$ on average and each doctor ranks its partner at $\Omega(\frac{n}{\log n})$ on average, despite the use of the doctor-proposing deferred-acceptance algorithm. Indeed, every stable matching has this property. Thus, in an unbalanced market, the proposing side loses its advantage if its side is {\em long} (that is, of greater cardinality than the proposal-receiving side) and the advantage transfers to the {\em short} side. 

Finally, the study of unbalanced random matching markets is closely related to the \emph{core-convergence} phenomenon: empirically, the set of stable matching is small, which mitigates the manipulability of the deferred acceptance mechanism. This phenomenon was first observed by Roth and Peranson~\cite{roth1999redesign}, and can be explained in markets with unbalance~\cite{AKL17,Pit19}, with short preference lists~\cite{immorlica2015incentives,kojima2009incentives}, and with correlated preferences~\cite{coles2014optimal,holzman2014matching,lee2016incentive,azevedo2016supply,gimbert2021two}, but is mitigated in markets exhibiting local structures~\cite{biro2022large,rheingans2024large}.

\section{Our Result}\label{sec:result}

Analyzing the doctor-proposing deferred-acceptance algorithm directly is much simpler in the balanced case than in the unbalanced case. This is because in balanced markets the algorithm terminates when each hospital has received at least one proposal. Thus, the problem corresponds to a coupon collecting problem, where a coupon is collected each time a hospital receives a proposal. More precisely, allowing for redundant proposals, the sequence of proposals made is uniform, and its length upper bounds the total rank of hospitals from the perspective of doctors; see \cite{Wil72,CT22}. In contrast, in an unbalanced market (with $n+1$ doctors) the algorithm terminates when some doctor has been rejected by every hospital. Ergo, the number of proposals made is far greater in an unbalanced market than in a balanced market. In particular, adding one doctor to a balanced market creates a long sequence of proposals, switching to a new doctor each time a proposal is accepted, and stopping only when the proposing doctor is rejected by all hospitals. As a result, this innocent looking change triggers a ``domino effect'' that affects not only the rejected doctor but also others. Consequently, understanding the sequence of proposals made under the doctor-proposing deferred-acceptance algorithm in an unbalanced random matching market is complex. Indeed, Cai and Thomas~\cite{CT22} state that ``this random process is fairly difficult to analyze" because it would entail keeping ``track of which doctor is proposing, which hospital they have already proposed to, and how likely each hospital is to accept a new proposal".

For this reason, Ashlagi et al.~\cite{AKL17} prove their result indirectly. Specifically, rather than doctor-proposing deferred-acceptance algorithm they study the hospital-proposing deferred-acceptance algorithm. Informally, their argument is as follows. Because the hospitals form the short side this terminates quickly with the hospital-optimal stable matching. They then show that even if a doctor rejects every proposal it receives it is extremely unlikely to receive many proposals. Thus the expected rank of its {\em best stable partner} is not good. This, in turn, implies that in \emph{every} stable matching the doctor is matched with a partner of low preference. Of course, the doctor-optimal stable matching is one of these stable matchings. So this conclusion must apply if we had instead run the doctor-proposing deferred-acceptance algorithm to find it. The proof in~\cite{AKL17} is quite long and intricate. Cai and Thomas~\cite{CT22} provide a nice, short proof but, again, their proof is indirect based on the outline above for hospital-proposings. Recent works by Kanoria et. al.~\cite{kanoria2021matching}, and Potukuchi and Singh~\cite{PS24} extend this approach to the setting in which agents have incomplete preference lists.

In this paper we present a direct proof that specifically analyzes the doctor-proposing deferred-acceptance algorithm. Let us formally state the result and then provide a high level overview of the proof strategy. 

\subsection{Statement of the Result} 

There are $n$ \emph{hospitals} that form a set $H = \{ h_1, h_2, \ldots, h_n\}$ and $(n+1)$ \emph{doctors} that form a set $D = \{ d_1, d_2, \ldots, d_{n+1}\}$. Each hospital $h \in H$ has a random preference over doctors that can be expressed by a permutation $r_h : D \to [n+1]$ taken uniformly at random from the set of all $(n+1)!$ permutations of the set $[n+1]=\{1, 2, \ldots, n+1\}$; $r_h(d)$ is the rank of a doctor $d$ for a hospital $h$ (the smaller the better). Similarly, each doctor $d \in D$ has a random preference over hospitals that can be expressed by a permutation $r_d : H \to [n]$ taken uniformly at random from the set of all $n!$ permutations of the set $[n]$. All $(2n+1)$ preferences are generated independently. 

Our results are asymptotic by nature, that is, we will assume that $n \to \infty$. We say that some event holds \emph{asymptotically almost surely} (\emph{a.a.s.}) if it holds with probability tending to one as $n \to \infty$. Our goal is to understand a typical ``happiness'' of the players once stable matching is established. To that end, we define the average ranks of doctors (from the perspective of hospitals they are matched to) and the average ranks of hospitals (from the perspective of doctors), that is,
$$
A_H(\mu) = \frac {1}{n} \sum_{h \in H} r_h ( \mu(h) ) \qquad \text{and} \qquad A_D(\mu) = \frac {1}{n+1} \sum_{d \in D} r_d ( \mu(d) ),
$$
where $\mu$ is a matching produced by the doctor-proposing deferred-acceptance algorithm.

Now, we are ready to state the main result.

\begin{theorem}
Consider a random matching market, where the preference lists of each of the $n+1$ doctors and each of the $n$ hospitals are sampled uniformly and independently at random. Then, the doctor-proposing deferred-acceptance algorithm produces a stable matching $\mu$ where each hospital is matched to a doctor it ranks at the order of $\log n$ on average and each doctor is matched to a hospital it ranks at the order of $\frac{n}{\log n}$ on average, a.a.s. In other words, a.a.s.
$$
A_H(\mu) = \Theta \left( \log n \right) \qquad \text{and} \qquad A_D(\mu) = \Theta \left( \frac {n}{\log n} \right).
$$
\end{theorem}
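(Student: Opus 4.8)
The plan is to analyze the doctor-proposal deferred-acceptance algorithm directly via the \emph{principle of deferred decisions}: rather than fixing all preferences in advance, I reveal the next hospital on a doctor's list only when that doctor proposes, and reveal a hospital's ranking of a doctor only when the proposal arrives. Under this device each fresh proposal lands on a (near-)uniform hospital and carries a (near-)uniform rank for that hospital, so the whole execution becomes a tractable random process; the only bookkeeping required is to account for the ``without replacement'' corrections (a doctor proposes only to hospitals that have not rejected them), which are lower-order.

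The first step is to reduce \emph{both} halves of the statement to a single quantity, the total number of proposals $T$. Since a doctor never re-proposes to a hospital that has rejected them and proposes in order of preference, the rank of a matched doctor's final partner equals the number of proposals they made; summing over doctors (the unique unmatched doctor contributes exactly $n$) gives $\sum_d r_d(\mu(d)) = T-n$, so the doctor side is $\Theta(n^2/\log n)$ iff $T=\Theta(n^2/\log n)$. For the hospital side, let $k_h$ be the number of proposals hospital $h$ receives, so $\sum_h k_h = T$. Under deferred decisions the ranks (for $h$) of its proposers form a uniform random subset of $[n+1]$, and $h$ ends matched to the best of them, whose rank concentrates around $n/k_h$; a routine Chernoff/martingale argument shows each $k_h$ concentrates around $T/n$, whence $\frac1n\sum_h r_h(\mu(h)) \approx n/(T/n) = n^2/T$. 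Thus \emph{everything is pinned down by the order of} $T$: with $T=\Theta(n^2/\log n)$ the hospital average rank is $\Theta(\log n)$ and the doctor average rank is $\Theta(n/\log n)$.

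To locate $T$, I would split the execution into a short \emph{fill phase}, lasting until every hospital has received at least one proposal, and a \emph{rejection-chain phase}. The fill phase is a coupon-collector process costing $\Theta(n\log n)$ proposals, negligible against the target. Throughout the rejection-chain phase exactly one doctor is unmatched and ``floating'': it proposes down its list, each proposal being accepted with probability $\alpha$ equal to the chance of beating the incumbent, and upon acceptance it displaces that incumbent, who becomes the new floater; the algorithm halts precisely when a floater is rejected $n$ consecutive times. Writing $\rho_h$ for the current rank of $h$'s held doctor and $\bar\rho=\frac1n\sum_h\rho_h$, each accepted proposal refreshes some $\rho_h$ to a roughly uniform value in $[1,\rho_h]$, so a fluid computation gives $\rho_h(t)\approx 2n^2/t$ and hence acceptance probability $\alpha(t)\approx \bar\rho(t)/n \approx n/t$ after $t$ proposals. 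I would make this rigorous by tracking $\bar\rho$ (or $\sum_h 1/\rho_h$) as a super/submartingale and applying Freedman- or Azuma-type concentration, so that $\alpha(t)$ is sandwiched between constant multiples of $n/t$ throughout the relevant window.

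The crux is the termination analysis. A floating doctor's run of consecutive rejections is geometric with parameter $\alpha(t)$, so a given run reaches length $n$ with probability about $(1-\alpha)^n\approx e^{-\alpha n}=e^{-n^2/t}$. For the \emph{lower} bound $T=\Omega(n^2/\log n)$ I would show no run reaches length $n$ before time $c_1 n^2/\log n$: there $\alpha\ge c\log n/n$, so each of the (at most polynomially many) runs reaches length $n$ with probability at most $e^{-\alpha n}\le n^{-c}$, and a union bound over runs kills early termination once $c_1$ is small. For the \emph{upper} bound $T=O(n^2/\log n)$ I would show that by time $c_2 n^2/\log n$ the $\Theta(n\log n)$ runs that have occurred include, with high probability, at least one of length $n$, via a second-moment argument exploiting the near-independence of distinct runs once the thresholds are controlled. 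Together these pin $T=\Theta(n^2/\log n)$ a.a.s. I expect this step to be the main obstacle, for two reasons: the acceptance probability is both time- and state-dependent, so the threshold concentration feeding into $\alpha(t)$ must hold \emph{uniformly across} the window in which termination occurs (a mild bootstrap in $t$); and the ``at least one long run'' event is an extremal event requiring genuine control of the dependencies between successive rejection chains, well beyond the first-moment heuristic.
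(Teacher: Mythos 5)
Your reduction of both sides of the theorem to the total number of proposals $T$ is sound (and matches the paper's accounting), but the termination analysis --- which you correctly identify as the crux --- contains a genuine modeling error. The algorithm does \emph{not} halt when a floating doctor suffers $n$ consecutive rejections; it halts when some doctor's \emph{cumulative} proposal count reaches $n$, accumulated across many separate floating periods: each acceptance makes the floater an incumbent, who is later displaced and resumes proposing from where they left off. Consequently your lower bound (union bound over runs, each reaching length $n$ with probability $\approx e^{-\alpha n}$) bounds the wrong event, since early termination can occur through many short runs ending in acceptances; and your upper-bound event ``some run of length $n$'' is literally impossible for any doctor who has made even one prior proposal, so the second-moment argument is computing the probability of a (near-)null event. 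Repairing this requires controlling, uniformly over all $n+1$ doctors, the random number of floating periods and the sum of time-inhomogeneous geometric run lengths with parameter $\alpha(t)\approx n/t$; in the critical window $t=\Theta(n^2/\log n)$ a typical doctor has $\Theta(\log n)$ floating periods of typical length $\Theta(n/\log n)$, so the cumulative count is borderline $\Theta(n)$ --- the theorem hinges on large-deviation control of exactly this quantity, which your sketch does not supply. The same objection applies to the steps you label routine: the per-doctor proposal counts are highly non-uniform and state-dependent, so neither the concentration of each $k_h$ around $T/n$ nor the uniform-in-$t$ sandwich on $\alpha(t)$ follows from a standard Chernoff/martingale argument; this state-dependence is precisely why Ashlagi et al.\ and Cai--Thomas resorted to indirect proofs.

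It is worth seeing how the paper's direct proof sidesteps the dynamics you are trying to track. It never follows the floating doctor at all: it stops the process prematurely when $\lfloor c\log n/4\rfloor$ hospitals become \emph{popular} (each having received $k=\lfloor n/(5c\log n)\rfloor$ proposals), shows via the static ``likes'' structure (Lemmas~\ref{lem:liking} and~\ref{lem:no_match}) that a.a.s.\ no stable matching can be reached before this stopping time --- the would-be unmatched doctor still has an unpopular hospital that likes them and is not matched better --- and then proves the quantitative bounds (Lemmas~\ref{lem:lower_bound} and~\ref{lem:upper_bound}) by a union bound over \emph{all possible} proposal-count profiles $(x_d)_{d\in D}$, compressed by rounding to powers of two so that only $\exp(O(\log^2 n))\,2^{O(n)}$ ``auxiliary scenarios'' remain. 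Quantifying over every profile simultaneously, regardless of which doctors make the proposals, is exactly what replaces the state-dependent tracking your plan requires; if you want to salvage your dynamic approach, you would need either an analogous uniform-over-histories device or genuinely new concentration for the cumulative per-doctor counts.
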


\subsection{Sketch of the Argument}

Let 
$$
k = \left\lfloor \frac {n}{5c \log n} \right\rfloor \qquad \text{and} \qquad \ell = \left\lfloor \frac {n^2} {ac \log n} \right\rfloor, 
$$
where $c = 18 > \max\{8, 4 / \log(5/4)\}$ and $a = 100 e^{200} \ge \max\{150, 300, 100 e^{200}\}$ are constants that are large enough to satisfy various inequalities that will be required for the argument to hold. We remark that we did not attempt to optimize the constants. 

We consider an algorithm in which doctors propose. We say that a hospital $h \in H$ is \emph{popular} if it received at least $k$ proposals. Hospitals that are not popular are called \emph{unpopular}. We will stop the algorithm prematurely at time $T$ when there are exactly $\lfloor c \log n / 4 \rfloor$ popular hospitals. Of course, the algorithm might converge to a stable matching before this happens, so we are not guaranteed to reach this situation. However, we will show that it does happen a.a.s.\ (see Corollary~\ref{cor:stop}). 

To that end we argue as follows. We say that a hospital $h \in H$ \emph{likes} a doctor $d \in D$ if $d$ is in one of the top $\lfloor c \log n \rfloor$ places on its list of preferences. Note that whether $h$ likes $d$ or not depends only on its list of preferences $r_h$; in particular, it is not affected by whether $d$ proposed to $h$ or not. We will first show that a.a.s.\ each doctor $d \in D$ is liked by at least $c \log n / 2$ hospitals (see Lemma~\ref{lem:liking}). Since at time $T$ there are at most $\lfloor c \log n / 4 \rfloor$ popular hospitals in total (exactly that many if the process stopped prematurely), there are thus at least $c \log n / 4$ unpopular hospitals that like $d$. It is unlikely that they are all matched with someone better than $d$. Indeed, we will show that a.a.s.\ this does not happen for {\em any} of the doctors (see Lemma~\ref{lem:no_match}). But, if a stable matching is reached, then the unique doctor that is unemployed at that time would have made a proposal to every hospital. The fact that they are still unemployed means that all of the hospitals are matched with someone better than~$d$. Since it is not true a.a.s., it implies that a.a.s.\ the process stopped  prematurely at time $T$ when the number of popular hospitals reached $\lfloor c \log n / 4 \rfloor$, not because a stable matching was found (again, see Corollary~\ref{cor:stop}). 

The argument so far shows that a.a.s.\ at least $\lfloor c \log n / 4 \rfloor$ hospitals become popular before the algorithm converges to a stable matching. This means that, before the algorithm finishes, a small set of $\lfloor c \log n / 4 \rfloor$ hospitals receives at least $k \cdot \lfloor c \log n / 4 \rfloor = (1+o(1))\cdot n/20$ proposals in total. Our next task is to show that such a situation is a.a.s.\ impossible if the total number of proposals is exactly $\ell$, regardless of which doctors make such proposals (see Lemma~\ref{lem:lower_bound}). (Clearly, this also implies that it cannot be done with less than $\ell$ proposals a.a.s.) Therefore a.a.s.\ doctors propose (on average) at least $\ell/(n+1) = \Omega(n/\log n)$ times before a stable matching is found. 

Finally, we show that a.a.s.\ less than $\lfloor 400 a c \log n \rfloor$ hospitals receive at most $n / (20 ac \log n)$ proposals if the total number of proposals is exactly $\ell$, regardless of which doctors make such proposals (see Lemma~\ref{lem:upper_bound}). This implies that a.a.s.\ hospitals are matched with doctors that are in the top $O(\log n)$ places of their preference lists (on average).

\section{The Direct Proof}\label{sec:proof}

\subsection{Concentration Inequalities}
Let us first state a few specific formulations of Chernoff's bound that will be useful. Let $X \in \textrm{Bin}(n,p)$ be a random variable distributed according to the Binomial distribution with parameters $n$ and $p$. Then, a consequence of \emph{Chernoff's bound} (see e.g.~\cite[Theorem~2.1]{JLR}) is that for any $t \ge 0$ we have
\begin{eqnarray}
	\prob( X \ge \E (X) + t ) &\le& \exp \left( - \E (X) \cdot \varphi \left( \frac {t}{\E (X)} \right) \right) ~~\le~~ \exp \left( - \frac {t^2}{2 (\E (X) + t/3)} \right)  \label{chern1} \\
	\prob( X \le \E (X) - t ) &\le& \exp \left( - \E (X) \cdot \varphi \left( \frac {-t}{\E (X)} \right) \right) ~~\le~~ \exp \left( - \frac {t^2}{2 \E (X)} \right),\label{chern}
\end{eqnarray}
where $\varphi(x) = (1+x) \log (1+x) - x$, for $x > -1$ (and $\varphi(x) = \infty$ for $x \le -1$). 

Moreover, these bounds hold in a more general setting as well, that is, for $X=\sum_{i=1}^n X_i$ where $(X_i)_{1\le i\le n}$ are independent variables and for every $i \in [n]$ we have $X_i \in \textrm{Bernoulli}(p_i)$ with (possibly) different $p_i$-s (again, see~e.g.~\cite{JLR} for more details).

\subsection{Observation 1}

We will use the well-known principle of deferred decision, introduced in~\cite{KMP90}, where the random preferences are drawn when the algorithm needs them and not a priori, simplifying the analysis. More precisely, we may defer exposing information about the random permutation $r_h$ for a given hospital $h$ until a new proposal is made to $h$. When $d$ proposes to $h$, we simply place $d$ in a random place on $h$'s list of preferences that is still available. 

Moreover, note that permutations of $[n+1]$ can be generated in the following unorthodox, but convenient, way. We first select a random subset $A$ of $[n+1]$ of cardinality $k$, permute independently elements of $A$ and $[n+1] \setminus A$, and then concatenate these two permutations (of lengths $k$ and $(n+1-k)$, respectively) to create a permutation of $[n+1]$. Translating this to our scenario, for each hospital $h \in H$, we may independently generate a random subset $A_h$ of $[n+1]$ of cardinality $k$, before the matching algorithm starts. Each time a new proposal is made to $h$, the doctor that proposed is placed at a random place from $A_h$ that is still available in $h$'s permutation. Once we run out of available spots in $A_h$ (which means that $h$ just became popular), we move to selecting spots from $[n+1] \setminus A_h$. However, for unpopular hospitals we only ever draw spots from $A_h$ giving us the following observation. 

\begin{lemma}\label{lem:deferred}
For each hospital $h \in H$, we independently generate a subset $A_h \subseteq [n+1]$ of cardinality $k$. We run the algorithm until it stops prematurely or a stable matching is created. Let $D_h \subseteq D$ be the set of doctors that proposed to hospital $h$. 
Then, the following property holds: for each unpopular hospital $h \in H$, doctors in $D_h$ have ranks from $A_h$ on the list of preferences of hospital $h$. \qed
\end{lemma}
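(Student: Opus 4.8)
The plan is to exhibit a coupling between the original random-preference model and the lazy, subset-based generation described in the discussion preceding the lemma, and then to read off the conclusion essentially by definition. First I would fix, for each hospital $h$, the random set $A_h \subseteq [n+1]$ of $k$ rank-values chosen uniformly and independently before the algorithm starts, together with independent uniformly random orderings of $A_h$ and of $[n+1] \setminus A_h$. As noted above, concatenating these two orderings produces a permutation distributed exactly as $r_h$, and the choices for distinct hospitals are independent; hence this scheme is a faithful sample of the random matching model.

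Next I would argue that this generation can be realized on the fly, in lockstep with the doctor-proposal deferred-acceptance algorithm, without biasing its execution. The key observation is that whenever a doctor $d$ proposes to a hospital $h$ currently holding $\hat d$, the only information the algorithm needs is the relative order of $r_h(d)$ and $r_h(\hat d)$. As long as $h$ has received fewer than $k$ proposals, every doctor who has ever proposed to $h$ has been assigned a rank drawn uniformly without replacement from $A_h$, so all comparisons the algorithm performs at $h$ are between two elements of $A_h$ whose relative order is uniformly random. Consequently the accept/reject decisions at $h$ depend only on the uniform internal order of $A_h$ and not on which specific values constitute $A_h$; deferring the exposure of $r_h$ until proposals arrive therefore introduces no bias, and the identity of the rank-values in $A_h$ need never be revealed while $h$ remains unpopular.

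The conclusion is then immediate from the bookkeeping. By construction, the scheme draws ranks from $A_h$ for exactly the first $k$ proposals received by $h$, switching to $[n+1] \setminus A_h$ only once $A_h$ has been exhausted, which by definition of \emph{popular} happens precisely when $h$ receives its $k$-th proposal. Thus if $h$ is unpopular when the process stops (fewer than $k$ proposals in total), the complement $[n+1] \setminus A_h$ was never touched, and every doctor in $D_h$ carries a rank belonging to $A_h$.

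I expect the only genuinely delicate point to be the middle step: making precise that the lazy exposure is a legitimate instance of the principle of deferred decision, i.e.\ that at every step the freshly exposed rank is conditionally uniform over the still-available positions given the algorithm's history, so that the coupled process and the original model share the same joint law of (execution trace, preferences). Everything else is definitional, namely the validity of the subset--concatenation sampler and the equivalence ``$h$ unpopular $\iff$ $A_h$ not exhausted $\iff$ ranks of $D_h$ lie in $A_h$.''
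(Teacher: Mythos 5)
Your proposal is correct and follows exactly the route the paper takes: the paper's ``proof'' of Lemma~\ref{lem:deferred} consists precisely of the subset--concatenation sampler (choose $A_h$ uniformly, permute $A_h$ and $[n+1]\setminus A_h$ independently, concatenate) combined with the principle of deferred decision, drawing spots from $A_h$ until $h$ receives its $k$-th proposal and only then touching the complement. Your extra care about the delicate middle step --- that accept/reject decisions at $h$ depend only on the uniform relative order of the exposed ranks, so the lazy exposure does not bias the execution trace --- is a point the paper leaves implicit, and spelling it out is a welcome addition rather than a deviation.
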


\subsection{Observation 2}

If the game ends because a stable matching is created, then some poor doctor $d$ must have proposed to every single hospital but is still unemployed. Most hospitals have this doctor $d$ not so high on their lists so it is not too surprising that they did not like them. However, there are still many hospitals that have $d$ quite high on their respective preference lists. It is unlikely that all of them, especially unpopular ones, found a better match. This motivates the following definition. 

Recall that a hospital $h \in H$ \emph{likes} a doctor $d \in D$ if $d$ is in one of the top $\lfloor c \log n \rfloor$ places on its list of preferences. For a given doctor $d \in D$, the number of hospitals that like $d$ is the binomial random variable $X \in \textrm{Bin}(n, \lfloor c\log n \rfloor / (n+1))$ with expectation asymptotic to $c \log n$. It follows immediately from Chernoff's bound~(\ref{chern}), applied with $t = \E (X) -  c \log n / 2 \sim c \log n / 2$, that there are at most $c \log n / 2$ hospitals that like $d$ with probability upper bounded by 
$$
\exp \left( - (1+o(1)) \frac {c}{8} \log n \right) = o(1/n), 
$$
provided that $c > 8$. Hence, by the union bound over $(n+1)$ doctors, we get the following.

\begin{lemma}\label{lem:liking}
Every doctor is liked by at least $c \log n / 2$ hospitals, a.a.s. \qed
\end{lemma}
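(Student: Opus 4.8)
The plan is to fix a single doctor, estimate via a concentration argument the probability that it is liked by too few hospitals, and then remove the dependence on the particular doctor by a union bound over all doctors.

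First I would fix an arbitrary doctor $d \in D$ and let $X$ denote the number of hospitals that like $d$, i.e.\ the number of hospitals $h$ for which $d$ occupies one of the top $\lfloor c \log n \rfloor$ positions of $r_h$. Since each hospital's preference is a uniformly random permutation of the $n+1$ doctors, a fixed doctor lands in the top $\lfloor c \log n \rfloor$ positions with probability exactly $\lfloor c \log n \rfloor / (n+1)$; and because the $n$ hospital preferences are drawn independently, these events are independent across hospitals. Hence $X \sim \textrm{Bin}\big(n,\, \lfloor c \log n \rfloor/(n+1)\big)$, with $\E X = n \lfloor c \log n\rfloor/(n+1) \sim c \log n$. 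The point worth stressing is that whether $h$ likes $d$ depends only on $r_h$ and not at all on the course of the algorithm, so this distributional statement is exact and requires none of the deferred-decision bookkeeping used elsewhere.

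Next I would apply the lower-tail Chernoff bound~(\ref{chern}) with the deviation $t = \E X - c \log n / 2 \sim c \log n / 2$, so that the event $\{X \le c \log n/2\}$ is exactly $\{X \le \E X - t\}$. The bound then yields
$$
\prob\left( X \le \tfrac{c}{2}\log n \right) \le \exp\left( -\frac{t^2}{2\E X}\right) = \exp\left( -(1+o(1))\,\frac{(c\log n/2)^2}{2 c \log n}\right) = \exp\left(-(1+o(1))\tfrac{c}{8}\log n\right).
$$
The decisive requirement is that this exponent strictly exceed $\log n$, which holds precisely when $c > 8$; with the chosen value $c = 18$ the right-hand side is $n^{-(1+o(1))c/8} = o(1/n)$ with ample slack.

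Finally I would take a union bound over all $n+1$ doctors: the probability that \emph{some} doctor is liked by fewer than $c \log n/2$ hospitals is at most $(n+1)\cdot o(1/n) = o(1)$, which is exactly the claim. I expect the only genuinely delicate point to be the budgeting in the concentration step, namely that the lower tail must overcome a full factor of $\log n$ in order to survive the union bound over $\Theta(n)$ doctors, and it is precisely this that forces the constraint $c > 8$. Everything else is routine; the only minor care needed is replacing $\lfloor c \log n \rfloor$ and $\tfrac{1}{2}c\log n$ by their asymptotics, but the resulting $(1+o(1))$ factors are harmless because the constant $c$ is fixed well above the threshold.
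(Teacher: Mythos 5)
Your proof is correct and follows essentially the same route as the paper's: the same observation that the number of hospitals liking a fixed doctor is exactly $\textrm{Bin}\big(n, \lfloor c\log n\rfloor/(n+1)\big)$ with mean asymptotic to $c\log n$, the same lower-tail Chernoff bound~(\ref{chern}) with $t = \E X - c\log n/2$ yielding failure probability $\exp\big(-(1+o(1))\tfrac{c}{8}\log n\big) = o(1/n)$ under the condition $c > 8$, and the same union bound over the $n+1$ doctors. Your added remark that the event ``$h$ likes $d$'' depends only on $r_h$ and is untouched by the run of the algorithm is precisely the point the paper makes when defining the liking relation, so nothing is missing.
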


\subsection{Observation 3}

Let us concentrate on a fixed doctor $d \in D$. We first go over hospitals' preference lists, regardless of whether $d$ proposed to them or not, to identify hospitals that like it. By Lemma~\ref{lem:liking}, since we aim for a statement that holds a.a.s., we may assume that there are at least $c \log n / 2$ hospitals that like $d$. We may arbitrarily select $\lceil c \log n / 2 \rceil$ of them.

Recall that we stop the process prematurely when there are exactly $\lfloor c \log n / 4 \rfloor$ popular hospitals, in total. Hence, at time $T$ there are at least $\lfloor c \log n / 4 \rfloor$ unpopular hospitals that like~$d$. Unfortunately, which of the hospitals that like $d$ are unpopular at time $T$ depends on many other events. Consequently, to ensure we deal with all possibilities, we will take a union bound over all possible selections of $\lfloor c \log n / 4 \rfloor$ hospitals from among our prior selection of $\lceil c \log n / 2 \rceil$ hospitals. 

By Lemma~\ref{lem:deferred}, doctors that proposed to a unpopular hospital $h$  (doctors in the set $D_h$) have ranks on the list of preferences of $h$ from a random set $A_h \subseteq [n+1]$ of cardinality $k$. We will investigate these sets $A_h$ to estimate the probability that the corresponding hospitals are matched to someone better than~$d$. The probability that a given unpopular hospital $h$ that likes $d$ is matched with someone better than $d$ is no greater than the probability that $h$ likes someone from those doctors that proposed to $h$, other than $d$, which in turn is no greater than the probability that $h$ likes someone from $D_h \setminus \{d\}$ (let us use $E_h$ for this event).  The probability that $h$ does \emph{not} like anyone from $D_h \setminus \{d\}$ is lower bounded by
\begin{eqnarray*}
\prod_{i=1}^{k} \left( 1 - \frac {\lfloor c\log n \rfloor }{n-i} \right) &=& \left( 1 - (1+o(1)) \frac {c\log n}{n} \right)^{k}\\ 
&=& \exp\left( - (1+o(1)) \frac {c\log n}{n} \cdot k \right) \\
&=& (1+o(1)) e^{-1/5}.
\end{eqnarray*}
(Recall that $(1-x)=\exp(x+O(x^2))$ by a Taylor expansion.)
Hence, the probability that we aimed to estimate is less than $1 - (1+o(1)) e^{-1/5} < 1/5$.
Since the events $E_h$ corresponding to the selected unpopular hospitals are independent, the probability that all unpopular hospitals that like $d$ are matched with someone better than $d$ is upper bounded by 
\begin{eqnarray*}
\binom{\lceil c \log n / 2 \rceil} {\lfloor c \log n / 4 \rfloor} \left( \frac {1}{5} \right)^{\lfloor c \log n / 4 \rfloor} 
&\le& 2^{\lceil c \log n / 2 \rceil} \left( \frac {1}{5} \right)^{\lfloor c \log n / 4 \rfloor} \\
&=& O(1) \cdot \left( \frac {4}{5} \right)^{c \log n / 4} \\
&=& O(1) \cdot \exp \left( - \frac {c \log(5/4)}{4} \log n \right)\\
&=& o(1/n), 
\end{eqnarray*}
provided that $c > 4 / \log(5/4) \approx 17.92$. (Note that, trivially,  for any two integers $1 \le a \le b$, $\binom{a}{b} \le 2^a$.)
Hence, by the union bound over all doctors, we get the following. 

\begin{lemma}\label{lem:no_match}
The following property holds at time $T$ a.a.s.: for every doctor $d$ there exists at least one unpopular hospital that is not matched with a doctor it prefers over $d$. \qed
\end{lemma}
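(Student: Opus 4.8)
The plan is to fix an arbitrary doctor $d \in D$ and bound the probability of the ``bad'' event that \emph{every} unpopular hospital liking $d$ is matched with a doctor that it ranks above $d$; showing this probability is $o(1/n)$ and then taking a union bound over the $n+1$ doctors yields the claim. Since the target statement is a.a.s., I may restrict attention to the event of Lemma~\ref{lem:liking}, so I assume $d$ is liked by at least $c\log n/2$ hospitals and fix an arbitrary collection $\cH_d$ of $\lceil c\log n/2\rceil$ of them. Because the process is halted at time $T$ precisely when the number of popular hospitals first reaches $\lfloor c\log n/4\rfloor$, at most $\lfloor c\log n/4\rfloor$ of the hospitals in $\cH_d$ can be popular, and hence at least $\lfloor c\log n/4\rfloor$ of them are unpopular at time $T$.

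The next step is the single-hospital estimate. Fix an unpopular hospital $h$ that likes $d$. By Lemma~\ref{lem:deferred}, every doctor that proposed to $h$ received a rank drawn from the pre-generated random set $A_h\subseteq[n+1]$ of size $k$, and since $h$ is unpopular at most $k$ doctors proposed. The event that $h$ is matched with a doctor it ranks above $d$ is contained in the event that $h$ likes at least one doctor of $D_h\setminus\{d\}$, i.e.\ that one of the (at most $k-1$) competing proposers lands among the top $\lfloor c\log n\rfloor$ ranks. The probability that $h$ likes \emph{none} of them is at least $\prod_{i=1}^{k}\bigl(1-\lfloor c\log n\rfloor/(n-i)\bigr)$, which by $k\sim n/(5c\log n)$ is asymptotic to $e^{-1/5}$; hence the single-hospital ``matched-better'' probability is at most $1-(1+o(1))e^{-1/5}<1/5$.

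The delicate point, and the step I expect to be the main obstacle, is that the identity of the unpopular hospitals inside $\cH_d$ is itself a random object determined by the entire proposal dynamics, so one cannot simply multiply single-hospital probabilities over a fixed set. I would circumvent this with a union bound over all $\binom{\lceil c\log n/2\rceil}{\lfloor c\log n/4\rfloor}$ choices of which $\lfloor c\log n/4\rfloor$ hospitals of $\cH_d$ are to play the role of the unpopular ones: if the bad event holds then some such subset $S$ consists entirely of unpopular hospitals that are all matched better than $d$. For any fixed $S$, the relevant liking events depend, for unpopular hospitals, only on the independently generated sets $\{A_h\}_{h\in S}$, which is exactly what Lemma~\ref{lem:deferred} provides; these events are therefore mutually independent, and the probability that every $h\in S$ is matched better than $d$ is at most $(1/5)^{\lfloor c\log n/4\rfloor}$.

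Finally I would collect the estimates. Multiplying the union-bound factor by the per-subset bound and using $\binom{\lceil c\log n/2\rceil}{\lfloor c\log n/4\rfloor}\le 2^{\lceil c\log n/2\rceil}$ gives $O(1)\cdot(4/5)^{c\log n/4}=O(1)\cdot\exp\bigl(-\tfrac{c\log(5/4)}{4}\log n\bigr)$, which is $o(1/n)$ provided $c>4/\log(5/4)\approx17.92$; the choice $c=18$ comfortably satisfies this. A union bound over the $n+1$ doctors then completes the argument.
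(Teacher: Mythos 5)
Your proposal is correct and follows essentially the same route as the paper's own proof: the same restriction to the event of Lemma~\ref{lem:liking}, the same single-hospital bound $1-(1+o(1))e^{-1/5}<1/5$ via Lemma~\ref{lem:deferred} and the product $\prod_{i=1}^{k}\bigl(1-\lfloor c\log n\rfloor/(n-i)\bigr)$, and crucially the same union bound over all $\binom{\lceil c\log n/2\rceil}{\lfloor c\log n/4\rfloor}$ candidate subsets to handle the fact that which hospitals end up unpopular is determined by the dynamics, exploiting independence of the pre-generated sets $A_h$. You correctly identified this dependence issue as the crux and resolved it exactly as the authors do, with the same constants and the same final $o(1/n)$ bound.
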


The above lemma implies immediately that the process had to stop prematurely at time~$T$, before a stable matching is found. Hence, we obtain:

\begin{corollary}\label{cor:stop}
The algorithm stops prematurely at time $T$ when there are exactly $\lfloor c \log n / 4 \rfloor$ popular hospitals, a.a.s. \qed
\end{corollary}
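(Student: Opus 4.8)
The plan is to argue by contradiction, exploiting the defining feature of the unbalanced market together with Lemma~\ref{lem:no_match}. By construction the algorithm halts in exactly one of two ways: either it reaches the premature stopping time $T$, at which point $\lfloor c \log n / 4 \rfloor$ hospitals have become popular, or it first converges to a stable matching. The corollary asserts that the second outcome is a.a.s.\ impossible, so I would let $B$ denote the event that the algorithm converges to a stable matching before $\lfloor c \log n / 4 \rfloor$ hospitals become popular and show $\prob(B) = o(1)$.

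First I would record what a stable matching looks like in this market. Since there are $n+1$ doctors but only $n$ hospitals, every stable matching leaves exactly one doctor, say $d^\star$, unmatched. By the termination condition of the deferred-acceptance algorithm (Section~\ref{sec:dfa}), an unmatched doctor can only remain unmatched once it has proposed to, and been rejected by, every hospital. Consequently, on the event $B$, every hospital --- and in particular every \emph{unpopular} hospital --- is matched with a doctor it strictly prefers to $d^\star$.

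Next I would confront this with Lemma~\ref{lem:no_match}. The only point that requires care is that the lemma is phrased for the configuration at the premature stopping time $T$, whereas on $B$ the algorithm halts slightly earlier, when strictly fewer than $\lfloor c \log n / 4 \rfloor$ hospitals are popular. I claim the conclusion of Lemma~\ref{lem:no_match} applies verbatim to this terminal configuration as well: the computation establishing Lemma~\ref{lem:no_match} only uses that the total number of popular hospitals is at most $\lfloor c \log n / 4 \rfloor$, so that (a.a.s., via Lemma~\ref{lem:liking}) each doctor retains at least $\lfloor c \log n / 4 \rfloor$ unpopular liking hospitals, and that each such hospital has received fewer than $k$ proposals whose ranks are drawn from $A_h$ (Lemma~\ref{lem:deferred}). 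Having strictly fewer popular hospitals only increases the number of unpopular liking hospitals, so the union bound in that argument still yields an $o(1/n)$ probability per doctor and $o(1)$ after union-bounding over all $n+1$ doctors.

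Putting the pieces together finishes the argument. A.a.s.\ the no-domination property of Lemma~\ref{lem:no_match} holds at termination: every doctor, in particular $d^\star$, has some unpopular hospital that likes it and is not matched with anyone better. But on $B$ we argued that $d^\star$ is dominated by every hospital. These two statements are contradictory, so $B$ can occur only on the $o(1)$-probability event where the no-domination property fails; hence $\prob(B) = o(1)$ and a.a.s.\ the algorithm stops prematurely at $T$. The one step I expect to need the most care is exactly this transfer of Lemma~\ref{lem:no_match} from the idealized time $T$ to the genuine terminal configuration on $B$ --- everything else is immediate from the combinatorics of the unbalanced market and the deferred-acceptance stopping rule.
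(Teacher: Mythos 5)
Your proof is correct and follows essentially the same route as the paper, which treats the corollary as an immediate consequence of Lemma~\ref{lem:no_match}: a stable matching would force the unique unmatched doctor to be dominated at every hospital, contradicting the a.a.s.\ existence of an unpopular liking hospital without a better partner. Your extra care in transferring Lemma~\ref{lem:no_match} to the actual terminal configuration is sound but already implicit in the paper, since there $T$ denotes the time the algorithm stops (prematurely or at a stable matching) and the lemma's argument only uses that the number of popular hospitals is at most $\lfloor c \log n / 4 \rfloor$, as the proof sketch notes explicitly.
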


\subsection{Observation 4}

Corollary~\ref{cor:stop} implies that a.a.s.\ at least $\lfloor c \log n / 4 \rfloor$ hospitals become popular before the algorithm converges to a stable matching. By definition, each of these popular hospitals received at least $k = \lfloor n/(5c \log n) \rfloor$ proposals. The next lemma shows that it takes at least $\ell = \lfloor n^2 / (ac \log n) \rfloor$ proposals, in total, to reach this situation.

\begin{lemma}\label{lem:lower_bound}
The following property holds a.a.s.: no set of $\lfloor c \log n / 4 \rfloor$ hospitals received at least $k \cdot \lfloor c \log n / 4 \rfloor = (1+o(1)) \, n/20$ proposals, given that the total number of proposals is $\ell = \lfloor n^2 / (ac \log n) \rfloor$ (regardless of how many proposals each doctor made).
\end{lemma}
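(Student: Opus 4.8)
The plan is to fix a set $S \subseteq H$ with $|S| = s := \lfloor c \log n / 4 \rfloor$, bound the probability that $S$ receives at least $ks = (1+o(1))\,n/20$ proposals among the first $\ell$ proposals, and then take a union bound over all $\binom{n}{s}$ choices of $S$. The engine is the principle of deferred decisions used in the proof of Lemma~\ref{lem:deferred}, applied in an online fashion: I reveal the proposals one at a time in the order the algorithm makes them, and when a doctor $d$ that has already made $t$ proposals makes its next one, the target hospital is uniform among the $n-t$ hospitals $d$ has not yet proposed to, conditionally on the entire history. Hence, conditionally on the past, the probability that this proposal lands in $S$ is at most $s/(n-t)$, and crucially this bound is valid no matter which doctor the algorithm selects to propose next and no matter how the $\ell$ proposals end up distributed among the doctors.

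I would then split the $S$-proposals into two groups. Call a proposal \emph{late} if the proposing doctor has already made at least $n/2$ proposals, and \emph{early} otherwise. For the late group I use the crude but decisive observation that each doctor proposes to each hospital at most once, so a single doctor contributes at most $s$ proposals to $S$; since the total number of proposals is $\ell$, at most $2\ell/n$ doctors can have made more than $n/2$ proposals, and therefore the number of late $S$-proposals is at most $s \cdot (2\ell/n) = n/(2a)$ \emph{deterministically}. For the early group, each conditional landing probability is at most $s/(n-n/2) = 2s/n$, so by standard stochastic domination the number $Y_{\mathrm{early}}$ of early $S$-proposals is dominated by $\mathrm{Bin}(\ell, 2s/n)$, whose mean is $2\ell s/n = n/(2a)$.

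Combining the two bounds, the event that $S$ receives at least $ks$ proposals forces $Y_{\mathrm{early}} \ge (1-o(1))\,n/20 - n/(2a) \ge n/40$, using that $a$ is a large constant. Since this target exceeds the mean $n/(2a)$ by a factor of order $a$, the Chernoff bound~(\ref{chern1}) gives $\prob(Y_{\mathrm{early}} \ge n/40) \le (20e/a)^{n/40}$, which with $a = 100\,e^{200}$ is at most $e^{-\Theta(n)}$. Finally, the number of candidate sets is $\binom{n}{s} \le (en/s)^s = \exp(O((\log n)^2))$ because $s = \Theta(\log n)$; as the per-set failure probability is $e^{-\Theta(n)}$ and $n \gg (\log n)^2$, the union bound over all $S$ yields total failure probability $o(1)$, which establishes the lemma.

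The hard part is the first paragraph: making the online deferred-decision coupling rigorous so that the conditional bound $s/(n-t)$ holds simultaneously for every proposal sequence the algorithm might generate, i.e.\ decoupling the (endogenous, dynamics-dependent) number of proposals each doctor makes from the uniform choice of its next target. Once this bound on the conditional probabilities is in hand, the early/late split and the large-deviation Chernoff estimate are routine, and the generous factor-$a$ slack built into the constants makes the union bound comfortable.
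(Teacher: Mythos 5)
Your proof is correct, but it takes a genuinely different route from the paper's, and in one respect a lighter one. The paper deliberately avoids conditioning on the algorithm's history: it quantifies over all ways $(x_d)_{d \in D}$ of splitting the $\ell$ proposals among the doctors, treats each doctor's proposals as the top $x_d$ entries of an independent random permutation, and takes a union bound over scenarios. Since there are $\exp(n \log n)$ raw scenarios, it must first compress them by rounding each $x_d$ up to a dyadic scale (the ``auxiliary scenarios''), reducing the count to $2^{3n} \exp(O(\log^2 n))$; this in turn forces a per-event failure probability of $\exp(-(1+o(1))\,3n)$, which is exactly why the strong $\varphi$-form of Chernoff's bound~(\ref{chern1}) and the enormous constant $a = 100e^{200}$ appear there. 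You instead confront the endogeneity head-on: revealing proposals in algorithmic order, the conditional probability that the next proposal of a doctor with $t$ prior proposals lands in the fixed set $S$ is at most $s/(n-t)$ --- the same deferred-decisions principle as in Lemma~\ref{lem:deferred}, noting that the identity of the next proposer and the ``early/late'' status are history-measurable while the unrevealed suffix of each doctor's list stays uniform --- so the early $S$-proposals are stochastically dominated by $\mathrm{Bin}(\ell, 2s/n)$, and only the $\binom{n}{s} = \exp(O(\log^2 n))$ hospital sets need a union bound. Your deterministic ``late''-doctor bound of $s \cdot 2\ell/n = n/(2a)$ is the pathwise analogue of the paper's ``active''-doctor count, with matching numerics. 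What each approach buys: yours needs only failure probability $e^{-\omega(\log^2 n)}$ per set, so the constants could be taken far smaller and the rounding machinery disappears entirely; the paper's version is formally stronger in that its conclusion holds a.a.s.\ simultaneously for \emph{every} allocation $(x_d)$, not merely along the run actually produced --- though for the application via Corollary~\ref{cor:stop} your process-level statement is all that is used. The one step you flag as ``hard,'' dominating the adapted indicator sum by a binomial, is in fact routine: since each indicator has conditional mean at most $2s/n$ given the past, a standard coupling with independent $\mathrm{Bernoulli}(2s/n)$ variables yields the domination, after which the sum falls under the generalization of~(\ref{chern1}) noted right after~(\ref{chern}).
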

\begin{proof}
Suppose that each doctor $d \in D$ proposed $x_d$ times for a total of $\sum_{d \in D} x_d = \ell = \lfloor n^2 / (ac \log n) \rfloor$ proposals. Of course, $d$ is proposing according to its list of preferences $r_d$ so $x_d$ determines which hospitals $d$ proposes to. Recall the number of positive, integer solutions to $y_1+y_2+\cdots y_K= N$, called the number of {\em compositions} of $N$ into exactly $K$ parts, is equal to $\binom{N-1}{K-1}$. Therefore, setting $K=|D|=n+1$ and $N=\ell$, the number of different scenarios we must consider is 
$$
\binom{\ell -1}{|D|-1} 
= \binom{\ell -1}{n} 
\le \left( \frac {e \ell (1+o(1))}{n} \right)^n 
= \left( \frac {e n (1+o(1))}{ac \log n} \right)^n 
\le n^n 
= \exp( n \log n ),
$$
where the first inequality follows from the fact that for any two integers $1 \le a \le b$, $\binom{a}{b} \le (ea/b)^b$.
Unfortunately, for our purposes, this number of scenarios is too large to successfully apply the union bound over. 

So we must reduce the number of scenarios to consider.
To do this, we will apply the following trick. Consider any scenario $(x_d)_{d \in D}$ with $\sum_{d \in D} x_d = \ell$. We construct an \emph{auxiliary} scenario (based on $(x_d)_{d \in D}$) as follows. If $x_d \le \lceil n / (ac\log n) \rceil$, then we fix $\hat{x}_d = \lceil n / (ac\log n) \rceil$. Otherwise (that is, if $x_d > \lceil n / (ac\log n) \rceil$), we ``round'' $x_d$ up, that is, we fix $\hat{x}_d \ge x_d$ to be the smallest value of the form $2^i \lceil n / (ac\log n) \rceil$ for some $i \in \N$. Trivially, if the original scenario $(x_d)_{d \in D}$ makes at least $k \cdot \lfloor c \log n / 4 \rfloor$ proposals to some set of $\lfloor c \log n / 4 \rfloor$ hospitals, then the auxiliary scenario $(\hat{x}_d)_{d \in D}$ does so too. More importantly, the total number of proposals of the auxiliary scenario is not much larger than the original value, namely, $\ell$. Indeed, note that
$$
\sum_{d \in D} \hat{x}_d \le (n+1) \cdot \lceil n / (ac\log n) \rceil + 2 \sum_{d \in D} x_d \le (1+o(1)) \, 3 \ell.
$$
The advantage is that there are substantially fewer auxiliary scenarios than in the original case. 

For any $i \in \N$, let $z_i$ be the number of values of $\hat{x}_d$ that are at least $2^i \lceil n / (ac\log n) \rceil$. Since at most $n/2^{i-1}$ values of $x_d$ are more than $2^{i-1} \lceil n / (ac\log n) \rceil$, we have $z_i \le n/2^{i-1}$. Hence, the number of auxiliary scenarios to consider can be estimated as follows:
\begin{eqnarray*}
\sum_{z_1 \ge z_2 \ge \ldots} \binom{n+1}{z_1} \binom{z_1}{z_2} \binom{z_2}{z_3} \cdots &\le& \sum_{z_1 \ge z_2 \ge \ldots} 2^{n+1} \cdot 2^{z_1} \cdot 2^{z_2} \cdots \\
&\le& \sum_{z_1 \ge z_2 \ge \ldots} 2^{(n+1)+n+n/2+n/4+\ldots} \\
&\le& (n+1)^{O(\log n)} \cdot 2^{3n} \\
&=& \exp( O( \log^2 n) ) \cdot 2^{3n}.
\end{eqnarray*}
The justification for the first and the second inequality is as follows: since $z_i \le n/2^{i-1}$, $\binom{z_i}{z_{i+1}} \le 2^{z_i} \le 2^{n/2^{i-1}}$. 
The third inequality holds because there are at most $\log_2 n$ terms $z_i$ and each of them is no greater than $n+1$. 

Finally, the number of choices for the set of $\lfloor c \log n / 4 \rfloor$ hospitals is 
$$
\binom{n}{\lfloor c \log n / 4 \rfloor} \le \left( \frac {en}{\lfloor c \log n / 4 \rfloor} \right)^{\lfloor c \log n / 4 \rfloor} = \exp( O( \log^2 n) ).
$$
Hence, the union bound has to be taken over $\exp( O( \log^2 n) ) \, 2^{3n}$ pairs consisting of some set of $\lfloor c \log n / 4 \rfloor$ hospitals and some auxiliary configuration. 

Let us then fix any set of $\lfloor c \log n / 4 \rfloor$ hospitals and any auxiliary configuration $(\hat{x}_d)_{d \in D}$ with $\sum_{d \in D} \hat{x}_d \le (1+o(1)) \, 3 \ell$. Before we estimate how many of the $\sum_{d \in D} \hat{x}_d$ proposals are made to the selected hospitals, we need to deal with one more technicality, namely, \emph{active} doctors that proposed at least $n/2$ times. Active doctors in the extreme case may propose to all the hospitals including the selected ones. But, clearly, there cannot be too many active doctors. Indeed, let $i \in \N$ be the largest $i$ such that $2^i \le ac \log n / 3$ (which implies that $2^i \ge ac \log n / 6$). We get that the number of values of $\hat{x}_d$ that are at least $2^i \lceil n / (ac\log n) \rceil \le (1+o(1)) \, n/3$ is $z_i \le n/2^{i-1} \le 12 n / (ac \log n)$. Hence, the number of active doctors is at most $12 n / (ac \log n)$ and so they contribute at most $12 n / (ac \log n) \cdot \lfloor c \log n / 4 \rfloor \le 3n/a \le n / 50$ to the total number of $(1+o(1)) \, n/20$ proposals made to the selected hospitals, provided that $a \ge 150$. It follows that at least $n/40$ proposals made to the selected hospitals have to come from non-active doctors. 

We expose now proposals from non-active doctors, one by one. For a given non-active doctor $d \in D$, we expose hospitals in the top $\hat{x}_d$ positions on the preference list of $d$, one proposal at the time. The probability that a proposal is made to a hospital from the selected ones is equal to
$$
\frac {\lfloor c \log n / 4 \rfloor - i}{n-j} \le \frac {\lfloor c \log n / 4 \rfloor}{n/2} \le \frac {c \log n}{2n},
$$
where $i$ is the number of the selected hospitals $d$ already proposed to and $j$ is the number of hospitals $d$ proposed to so far; clearly, $j \le n/2$, since $d$ is not active. As mentioned earlier, the total number of proposals (coming from non-active doctors) is at most $(1+o(1)) \, 3\ell \le 4 \ell$. Hence, the number of proposals that are made to the selected hospitals can be stochastically upper bounded by the binomial random variable $X \in \text{Bin}(4 \ell, c \log n / (2n))$ with $\E (X) = 2 \ell c \log n / n = 2n / a - o(1)$ and $\E (X) \le 2n / a$. We apply the (slightly more fancy) Chernoff's bound~(\ref{chern1}) with $t = n/40 - \E (X) \ge n/40 - n/150 > n/50$, provided that $a \ge 300$. Note that
\begin{eqnarray*}
\varphi (t / \E (X)) &\ge& \varphi(a/100)\\ 
&=& (1+a/100) \log (1+a/100) - a/100 \\
&\ge& (a/100) \log (a/100) - a/100 \\
&\ge& (a/100) \log (a/100) \left(1 - \frac {1}{\log(a/100)} \right) \\
&\ge& 3a/2,
\end{eqnarray*}
provided that $a \ge 100 e^{200}$. 
We get that the failure probability is upper bounded by 
$$
\exp \left( - \E (X) \cdot \varphi(t / \E (X)) \right) \le \exp \left( - (1+o(1)) \frac {2n}{a} \cdot \frac {3a}{2} \right) = \exp( - (1+o(1)) \, 3n)
$$
and, as promised, the conclusion holds by the union bound. Indeed, the probability that there exists a set of hospitals and an auxiliary configuration that fails the desired property is less than
$$
\exp( O( \log^2 n) ) \, 2^{3n} \exp( - (1+o(1)) \, 3n) = \exp( O( \log^2 n) ) \, \left( \frac {2}{e} + o(1) \right)^{3n} = o(1). \qedhere
$$
\end{proof}

\subsection{Observation 5}

The next lemma implies that less than $\lfloor 400 a c \log n \rfloor$ hospitals received at most $n / (20 ac \log n)$ proposals. These hospitals might be matched with a doctor that is not necessarily at the top of their corresponding lists but, trivially, with the rank of $O(n)$. The remaining hospitals received at least $n / (20 ac \log n)$ proposals and so the rank of their partners is expected to be $O(\log n)$. Hence, the expected rank of a typical hospital is
$$
\frac {O(\log n)}{n} \cdot O(n) + \frac{n-O(\log n)}{n} \cdot O(\log n) = O(\log n).
$$

\begin{lemma}\label{lem:upper_bound}
The following property holds a.a.s.: no set of $\lfloor 400 a c \log n \rfloor$ hospitals received at most $20 n$ proposals, given that the total number of proposals is $\ell = \lfloor n^2 / (ac \log n) \rfloor$ (regardless of how many proposals each doctor made).
\end{lemma}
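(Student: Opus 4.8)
The plan is to mirror the structure of the proof of Lemma~\ref{lem:lower_bound}, but now establishing a \emph{lower} bound on the number of proposals received by every set $S$ of $m := \lfloor 400 ac \log n\rfloor$ hospitals. Fix a scenario $(x_d)_{d \in D}$ with $\sum_{d} x_d = \ell$, so that $d$ proposes to the top $x_d$ hospitals on its list $r_d$; writing $P_d$ for this set, the number of proposals received by $S$ is $\sum_{d} |P_d \cap S|$. Since each $P_d$ is a uniformly random $x_d$-subset of $[n]$, independently across doctors, this quantity is a sum of independent hypergeometric random variables with mean $\sum_d x_d \cdot m/n = \ell m /n = (1+o(1))\,400n$, far above the target threshold $20n$. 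The heart of the argument is therefore a lower-tail concentration bound showing that $\sum_d |P_d \cap S| \le 20n$ is exponentially unlikely, followed by a union bound over $S$ and over scenarios. As in Lemma~\ref{lem:lower_bound}, the naive number of scenarios $\binom{\ell+n}{n}$ is of order $\exp(n\log n)$, too large for a direct union bound, so the main work is again to coarsen the scenarios down to $2^{O(n)}$ representatives.

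The coarsening now has to round \emph{down}: setting $\check{x}_d \le x_d$ gives $P_{\check{x},d}\subseteq P_{x,d}$, hence $\sum_d |P_{\check{x},d}\cap S| \le \sum_d |P_{x,d}\cap S|$, so if the original scenario makes at most $20n$ proposals to $S$ then so does the rounded one, which is exactly the monotonicity we need. Concretely, with $T_0 = \lceil n/(2ac\log n)\rceil$, I would set $\check{x}_d = 0$ whenever $x_d < T_0$ and otherwise round $x_d$ down to the nearest value $2^i T_0$, so that $\check{x}_d \ge x_d/2$. The delicate point, and the main obstacle, is that rounding down must not discard too much mass: the ``tiny'' doctors with $x_d < T_0$ are thrown away entirely and there can be up to $n+1$ of them. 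This is precisely why $T_0$ is taken to be \emph{half} the threshold of Lemma~\ref{lem:lower_bound}: the total mass of tiny doctors is at most $(n+1)T_0 = (1/2+o(1))\ell$, so at least $(1/2-o(1))\ell$ of the mass sits in non-tiny doctors, and rounding those down by at most a factor of two retains a total of $\ell' := \sum_d \check{x}_d \ge (1/4 - o(1))\ell$. Counting representatives exactly as in Lemma~\ref{lem:lower_bound}, with $z_i$ the number of $\check{x}_d \ge 2^i T_0$, the mass bound gives $z_i \le \ell/(2^i T_0) \le 2n/2^i$, whence $\sum_{i\ge 0} z_i \le 4n$ and the number of auxiliary scenarios is at most $2^{O(n)}\exp(O(\log^2 n))$.

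Finally, fix an auxiliary scenario with $\ell' \ge (1/4-o(1))\ell$ and a set $S$ with $|S| = m$. The number of proposals landing in $S$ is a sum of independent hypergeometric variables with mean $\mu = \ell' m/n \ge (1-o(1))\,100n$; since sampling without replacement is at least as concentrated as sampling with replacement (a classical fact of Hoeffding, via the convex order / negative association), its lower tail is dominated by that of $\mathrm{Bin}(\ell', m/n)$, to which~(\ref{chern}) applies with $t = \mu - 20n \ge (1-o(1))\,80n$. As $(\mu-20n)^2/(2\mu)$ is increasing in $\mu$ for $\mu > 20n$, this yields a failure probability of at most $\exp(-t^2/(2\mu)) \le \exp(-(1+o(1))\,32n)$. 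A union bound over the $\binom{n}{m} = \exp(O(\log^2 n))$ choices of $S$ and the $2^{O(n)}$ auxiliary scenarios then leaves a total failure probability of $2^{O(n)}\exp(O(\log^2 n))\exp(-(1+o(1))\,32n) = o(1)$, since the decay $\exp(-(1+o(1))\,32n)$ overwhelms the $2^{O(n)}$ number of terms. I expect the only genuinely subtle step to be the mass-retention bookkeeping: unlike the round-up used in Lemma~\ref{lem:lower_bound}, rounding down interacts badly with a large population of small-but-not-negligible doctors, and the choice of $T_0$ is exactly what keeps the retained mass a constant fraction of $\ell$ while holding the scenario count at $2^{O(n)}$.
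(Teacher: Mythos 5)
Your proposal is correct and follows essentially the same route as the paper's proof: the same round-down coarsening of scenarios (your threshold $\lceil n/(2ac\log n)\rceil$ in place of the paper's $\lceil n/(4ac\log n)\rceil$ is immaterial, since both retain $(1/4-o(1))\ell$ of the mass and keep the count of auxiliary scenarios at $2^{O(n)}\exp(O(\log^2 n))$), followed by a lower-tail estimate for the proposals landing in a fixed set $S$ and a union bound over sets and auxiliary scenarios. The only substantive difference is the concentration step: where the paper exposes proposals one at a time and stochastically lower-bounds the count by a sum of independent $\textrm{Bernoulli}(\hat{x}_d/(2n))$ variables (losing a factor of $2$, hence working with mean $\ge 40n$), you use the exact hypergeometric representation of $|P_d \cap S|$ together with Hoeffding's convex-order domination by $\textrm{Bin}(\ell', m/n)$ with mean $(1-o(1))\,100n$ --- a marginally cleaner and sharper way to reach the same $e^{-\Omega(n)}$ tail, with both arguments comfortably beating the $2^{O(n)}$ union bound.
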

\begin{proof}
The argument is the same as the one used to prove Lemma~\ref{lem:lower_bound}, except for a few modifications.
Specifically, as before, suppose that a doctor $d \in D$ proposed $x_d$ times for a total of $\sum_{d \in D} x_d = \ell = \lfloor n^2 / (ac \log n) \rfloor$ proposals. Again, to reduce the number of scenarios to consider, we construct an \emph{auxiliary} scenario (based on $(x_d)_{d \in D}$) but this time we ``round'' $x_d$ down instead. If $x_d < \lceil n / (4ac\log n) \rceil$, then we fix $\hat{x}_d = 0$. Otherwise (that is, if $x_d \ge \lceil n / (4ac\log n) \rceil$), we fix $\hat{x}_d \le x_d$ to be the largest value of the form $2^i \lceil n / (4ac\log n) \rceil$ for some $i \in \N \cup \{0\}$. If the original scenario $(x_d)_{d \in D}$ makes at most $20 n$ proposals to some set of $\lfloor 400 a c \log n \rfloor$ hospitals, then the auxiliary scenario $(\hat{x}_d)_{d \in D}$ does so too. The total number of proposals of the auxiliary scenario is not much smaller than $\ell$, the original value. Specifically,
$$
\sum_{d \in D} \hat{x}_d \ge \sum_{d \in D} x_d - (n+1) \cdot \left\lceil \frac {n}{4ac\log n} \right\rceil - \frac 12 \sum_{d \in D} x_d \ge (1+o(1)) \, \ell/4.
$$
As before, the advantage is that there are substantially less auxiliary scenarios than the original ones. For any $i \in \N \cup \{0\}$, let $z_i$ be the number of values of $\hat{x}_d$ that are at least $2^i \lceil n / (4ac\log n) \rceil$. Since at most $4n/2^{i}$ values of $x_d$ are at least $2^{i} \lceil n / (4ac\log n) \rceil$, we have $z_i \le 4n/2^{i}$. Hence, the number of auxiliary scenarios to consider can be estimated as follows:
\begin{eqnarray*}
\sum_{z_0 \ge z_1 \ge \ldots} \binom{n+1}{z_0} \binom{n+1}{z_1} \binom{n+1}{z_2} \binom{z_2}{z_3} \binom{z_3}{z_4}  \cdots &\le& \exp( O( \log^2 n) ) \, 2^{5n}.
\end{eqnarray*}
Since the number of choices for the set of $O( \log n )$ hospitals is $\exp( O( \log^2 n) )$, the union bound has to be taken over $\exp( O( \log^2 n) ) \, 2^{5n}$ pairs consisting of some set of $\lfloor 400 a c \log n \rfloor$ hospitals and some auxiliary configuration. 

Let us fix any set of $\lfloor 400 a c \log n \rfloor$ hospitals and any auxiliary configuration $(\hat{x}_d)_{d \in D}$ with $\sum_{d \in D} \hat{x}_d \ge (1+o(1)) \, \ell/4$. We expose proposals from doctors, one by one. For a given doctor $d \in D$ with $\hat{x}_d > 0$ (which implies that, in fact, $\hat{x}_d \ge \lceil n / (4ac\log n) \rceil$), hospitals in the top $\hat{x}_d$ positions of the preference list of $d$ form a random set of size $\hat{x}_d$ from $H$, the set of hospitals. We order (arbitrarily) the selected hospitals and check if $d$ proposed to them, one hospital at a time following the fixed order. Suppose that we already investigated $i$ hospitals and $d$ made an offer to $j \le i$ of them. The remaining $x_d - j$ hospitals that $d$ made an offer to form a random set from the $n-i$ hospitals we have not checked yet. So, the probability that a proposal is made to the next hospital from the selected ones is equal to
$$
\frac {\binom{n-i-1}{\hat{x}_d-j-1}}{\binom{n-i}{\hat{x}_d-j}} = \frac {\hat{x}_d-j}{n-i} = \frac {\hat{x}_d-O(\log n)}{n-O(\log n)} = (1+o(1)) \frac {\hat{x}_d}{n} \ge \frac {\hat{x}_d}{2n}\,.
$$
Hence, the number of proposals that are made to the selected set of hospitals can be stochastically lower bounded by  the random variable $X=\sum_{d \in D} \sum_{i=1}^{\lfloor 400 a c \log n \rfloor} X_{d, i}$, where $(X_{d,i})$ are independent variables and for every $1 \le i \le \lfloor 400 a c \log n \rfloor$ and $d \in D$ we have $X_{d,i} \in \textrm{Bernoulli}(p_d)$ with $p_d = \hat{x}_d / (2n)$. Clearly,
\begin{eqnarray*}
\E (X) &=& \E \left( \sum_{d \in D} \sum_{i=1}^{\lfloor 400 a c \log n \rfloor} X_{d, i} \right) ~=~ \sum_{d \in D} \, \lfloor 400 a c \log n \rfloor \cdot p_d ~=~ \frac { \lfloor 400 a c \log n \rfloor } { 2n} \sum_{d \in D} \hat{x}_d \\
&\ge& (1+o(1)) \frac {50 ac \ell \log n}{n} ~=~ (1+o(1)) 50n ~\ge~ 40n.
\end{eqnarray*}
It follows from Chernoff's bound~(\ref{chern}) with $t = \E (X) - 20n \ge \E (X) / 2$ (see also the comment right after~(\ref{chern})) that
$$
\prob(X \le 20n) = \prob( X \le \E (X) - t ) \le \exp \left( - \frac {t^2}{2\E (X)} \right) \le \exp \left( - \frac {\E (X)}{8} \right) \le e^{-5n}
$$
and, as before, the conclusion holds by the union bound. 
\end{proof}

\ 

\noindent {\bf Acknowledgement.}
This material is based upon work supported by the National Science Foundation under Grant No. DMS-1928930, while the authors were in residence at the Simons Laufer Mathematical Sciences Institute in Berkeley, California, during the Fall semester of 2023.

\bibliography{refs} 

\begin{thebibliography}{10}

\bibitem{AKL17}
I.~Ashlagi, Y.~Kanoria, and J.~Leshno.
\newblock Unbalanced random matching markets: the stark effect of competition.
\newblock {\em Journal of Political Economy}, 125(1):69--98, 2017.

\bibitem{azevedo2016supply}
E.~Azevedo and J.~Leshno.
\newblock A supply and demand framework for two-sided matching markets.
\newblock {\em Journal of Political Economy}, 124(5):1235--1268, 2016.

\bibitem{biro2022large}
P.~Bir{\'o}, A.~Hassidim, A.~Romm, R.~Shorrer, and S.~Sovago.
\newblock The large core of college admission markets: Theory and evidence.
\newblock In {\em Proceedings of the 23rd ACM Conference on Economics and
  Computation (EC)}, pages 958--959, 2022.

\bibitem{CT22}
L.~Cai and C.~Thomas.
\newblock The short-side advantage in random matching markets.
\newblock In {\em Proceedings of the 15th Symposium on Simplicity in Algorithms
  (SOSA)}, pages 257--267, 2022.

\bibitem{coles2014optimal}
P.~Coles and R.~Shorrer.
\newblock Optimal truncation in matching markets.
\newblock {\em Games and Economic Behavior}, 87:591--615, 2014.

\bibitem{GS62}
D.~Gale and L.~Shapley.
\newblock College admissions and the stability of marriage.
\newblock {\em American Mathematical Monthly}, 69(1):9--15, 1962.

\bibitem{gimbert2021two}
H.~Gimbert, C.~Mathieu, and S.~Mauras.
\newblock Two-sided matching markets with strongly correlated preferences.
\newblock In {\em International Symposium on Fundamentals of Computation
  Theory}, pages 3--17. Springer, 2021.

\bibitem{GI89}
D.~Gusfield and R.~Irving.
\newblock {\em The Stable Marriage Problem: Structure and Algorithms}.
\newblock MIT Press, 1989.

\bibitem{holzman2014matching}
R.~Holzman and D.~Samet.
\newblock Matching of like rank and the size of the core in the marriage
  problem.
\newblock {\em Games and Economic Behavior}, 88:277--285, 2014.

\bibitem{immorlica2015incentives}
N.~Immorlica and M.~Mahdian.
\newblock Incentives in large random two-sided markets.
\newblock {\em ACM Transactions on Economics and Computation}, 3(3):1--25,
  2015.

\bibitem{JLR}
S.~Janson, T.~\L{}uczak, and A.~Ruci\'nski.
\newblock {\em Random graphs}.
\newblock John Wiley \& Sons, 2011.

\bibitem{kanoria2021matching}
Y.~Kanoria, S.~Min, and P.~Qian.
\newblock In which matching markets does the short side enjoy an advantage?
\newblock In {\em Proceedings of the 32nd Symposium on Discrete Algorithms
  (SODA)}, pages 1374--1386, 2021.

\bibitem{Knu82}
D.~Knuth.
\newblock Mariages stables et leurs relations avec d'autres probl\`emes
  combinatoires.
\newblock {\em Les Presses de l'Universit\'e de Montr\'eal}, 1982.

\bibitem{KMP90}
D.~Knuth, R.~Motwani, and B.~Pittel.
\newblock Stable husbands.
\newblock {\em Random Structures and Algorithms}, 1(1):1--14, 1990.

\bibitem{kojima2009incentives}
F.~Kojima and P.~Pathak.
\newblock Incentives and stability in large two-sided matching markets.
\newblock {\em American Economic Review}, 99(3):608--627, 2009.

\bibitem{lee2016incentive}
S.~Lee.
\newblock Incentive compatibility of large centralized matching markets.
\newblock {\em The Review of Economic Studies}, 84(1):444--463, 2016.

\bibitem{MW71}
D.~McVitie and L.~Wilson.
\newblock The stable marriage problem.
\newblock {\em Communications of the ACM}, 14:486--490, 1971.

\bibitem{Pit89}
B.~Pittel.
\newblock The average number of stable matchings.
\newblock {\em SIAM Journal on Discrete Mathematics}, 2(4):530--549, 1989.

\bibitem{Pit92}
B.~Pittel.
\newblock On likely solutions of a stable matching problem.
\newblock {\em The Annals of Applied Probability}, 2(2):358--501, 1992.

\bibitem{Pit19}
B.~Pittel.
\newblock On likely solutions of a stable matching problem with unequal numbers
  of men and women.
\newblock {\em Mathematics of Operations Research}, 44(1):122--146, 2019.

\bibitem{PS24}
A.~Potukuchi and S.~Singh.
\newblock Unbalanced random matching markets with partial preferences, 2024.
\newblock \href {https://arxiv.org/abs/2402.09667} {\path{arXiv:2402.09667}}.

\bibitem{rheingans2024large}
R.~Rheingans-Yoo.
\newblock Large random matching markets with localized preference structures
  can exhibit large cores.
\newblock {\em Games and Economic Behavior}, 144:71--83, 2024.

\bibitem{roth1999redesign}
A.~Roth and E.~Peranson.
\newblock The redesign of the matching market for {A}merican physicians: Some
  engineering aspects of economic design.
\newblock {\em American Economic Review}, 89(4):748--780, 1999.

\bibitem{Wil72}
L.~Wilson.
\newblock An analysis of the stable marriage assignment algorithm.
\newblock {\em BIT Numerical Mathematics}, 12:569--575, 1972.

\end{thebibliography}

\end{document}